\newtheorem{mydef}{Definition}
\newtheorem{mytheorem}{Theorem}
\newtheorem{proof}{Proof}
\journal{Journal of \LaTeX\ Templates}
\begin{document}

\begin{frontmatter}

\title{Differentially Private User-based Collaborative Filtering Recommendation Based on K-means Clustering}


\author[mymainaddress]{Zhili Chen}
\cortext[mycorrespondingauthor]{Corresponding author}
\ead{zlchen@ahu.edu.cn}

\author[mymainaddress]{Yu Wang}

\author[mymainaddress]{Shun Zhang}

\author[mymainaddress]{Hong Zhong}

\author[mymainaddress2]{Lin Chen}

\address[mymainaddress]{School of Computer Science and Technology, Anhui University, 230601 Hefei, China}
\address[mymainaddress2]{Lab. Recherche Informatique (LRI-CNRS UMR 8623), Univ. Paris-Sud, 91405 Orsay, France}

\begin{abstract}
Collaborative filtering (CF) recommendation algorithms are well-known for their outstanding recommendation performances, but previous researches showed that they could cause privacy leakage for users due to k-nearest neighboring (KNN) attacks. Recently, the notion of differential privacy (DP) has been applied to privacy preservation for collaborative filtering recommendation algorithms. However, as far as we know, existing differentially private CF recommendation schemes degraded the recommendation performance (such as recall and precision) to an unacceptable level. In this paper, in order to address the performance degradation problem, we propose a differentially private user-based collaborative filtering recommendation scheme based on k-means clustering (KDPCF). Specifically, to improve the recommendation performance, we first cluster the dataset into categories by k-means clustering and appropriately adjust the size of the target category to which the target user belongs, so that only users in the well-sized target category can be used for recommendations. Then we efficiently select a set of neighbors from the target category at one time by employing only one exponential mechanism instead of the composition of multiple ones, and base on the neighbor set to recommend. We theoretically prove that our scheme achieves differential privacy. Empirically, we use the MovieLens dataset to evaluate our recommendation system. The experimental results demonstrate a significant performance gain compared to existing schemes.
\end{abstract}

\begin{keyword}
Differential Privacy, K-means Clustering, Recommendation System, Collaborative Filtering
\end{keyword}

\end{frontmatter}


\section{Introduction}

\subsection{Motivation and Problem Statement}

Today's Internet is inundated by data, making it more and more difficult for users to quickly locate the data they look for. Recommendation systems are thus developed to facilitate the searching task by recommending pertinent data or resources for users. A class of efficient and personalized recommendation systems are based on collaborative filtering (CF) algorithms, which discover the potential consumption trend of users by mining the user-data relationships. Despite the performance gain of CF-based recommendation algorithms, the involvement of users' historical data in the recommendation process may cause severe information leakage, as demonstrated by the K-nearest neighboring (KNN) attacks \cite{mobasher2005effective}.


\subsection{Limitation of Prior Art}

For individual privacy protection issues, Dwork et al. first proposed the differential privacy (DP) protection method in \cite{dwork2006calibrating,dwork2008differential}, which can well deal with the differential attack and other attacks based on background knowledge. In view of its strong privacy guarantees, many articles have recently applied the concept of DP to recommendation systems to solve the privacy issues of recommendation systems. But the simple application of DP often leads to the loss of data utility, namely, it greatly reduces the recommendation performance. How to achieve the balance between privacy protection and recommendation accuracy is an important problem.

Previous differentially private user-based collaborative filtering recommendation schemes suffer a great performance degradation when applying the notion of differential privacy \cite{zhu2013differential} \cite{zhu2016differential}. The authors of \cite{zhu2013differential} found that the Mean Absolute Error (MAE) values of recommendations degrade much after applying differential privacy techniques, and used local sensitivities instead of global sensitivities in their scheme to improve the recommendation performance. However, applying local sensitivities weakens differential privacy guarantees \cite{dwork2014algorithmic}. The authors of \cite{zhu2016differential} applied the advanced composition theorem to improve the recommendation performance, but their experimental results demonstrated that the recommendation recalls with differential privacy degraded severely compared to the original ones. Actually, these performance degradations for recommendations may be unacceptable in practice.

For the above performance degradation issue, there are probably two main causes. The first one is that, when applying the exponential mechanism, too many unrelated or weakly related output items are defined for the differentially private recommendations, and these output items may overwhelm the accurate recommendation results due to the randomness of the exponential mechanism. The second one is that too much ``noise'' is added to the recommendations due to the composition of a great many differentially private algorithms. Having confirmed these causes, we refer to the practice of Xue et al. \cite{xue2005scalable}, and apply the clustering algorithm to designing a differentially private recommendation system. Also, we reduce the application times of the exponential mechanism as much as possible, to improve the recommendation performance.



\subsection{Summary of Contribution}

In this paper, the k-means clustering algorithm is applied to a differentially private user-based collaborative filtering recommendation. First of all, we cluster the users from the entire rating matrix $M$, and adjust the size of the target category, which contains the target user. Next, we select a neighbor set from the target category at one time, using the exponential mechanism. Finally, predict user ratings and give top-$m$ items recommendations based on the neighbor set selected.

The contributions of this paper can be summarized as follows:

\begin{enumerate}
\item We propose a differentially private user-based collaborative filtering recommendation scheme based on k-means clustering. Theoretically, we prove that our scheme satisfies $\epsilon$-differential privacy. Also, we have experimentally shown that our scheme provides a higher recommendation performance than an existing scheme.


\item We come up with an approach based on the bisecting k-means to adjust the size of the target category, such that the target category is in an appropriate size, and both hence recommendation performance and user privacy are reasonably balanced.


\item We design a random neighbors selection algorithm to select a neighbor set at one time with only an application of the exponential mechanism instead of multiple ones. This algorithm adds much less noise to the proposed mechanism compared to using multiple exponential mechanisms, and thus improves the recommendation performance.
\end{enumerate}

The remainder of this paper is organized as follows.
In Section~\ref{sec:relatedwork}, we briefly review the related work. Section~\ref{sec:Problem Statement} describes the privacy issues of the user-based collaborative filtering recommendations. Section~\ref{sec:Technical Preliminaries} introduces some basic concepts and theorems used in this work. We provide the detailed design of our scheme in Section~\ref{sec:KDPCF}. Section~\ref{sec:experiment} discusses and analyzes our experimental results. Finally, we conclude the paper and provides future work in Section~\ref{sec:conclusion}.

\section{Related Work}\label{sec:relatedwork}

\textbf{Traditional Privacy-preserving Recommendations.} In recommendation systems, the traditional approaches to protecting user privacy are mainly based on cryptography and anonymization. The cryptography based approach is suitable for recommendations performed by multiple mutually distrustful parties, and it normally incurs heavy computational costs \cite{canny2002collaborative,armknecht2011efficient}, especially when the amount of historical data is large. Instead, the anonymization based approach has much higher computational efficiency, but it cannot resist background knowledge attacks, and may weaken the utility of the data severely (especially for high-dimensional data) \cite{brickell2008cost,chang2010towards}. Different from these traditional works, in this paper we focus on achieving differential privacy for recommendation systems, such that the computational cost is low and the user privacy is theoretically guaranteed.

\textbf{Differentially Private Recommendations.} The notion of differential privacy has been applied to recommendation systems previously. McSherry et al. \cite{mcsherry2009differentially} first applied differential privacy to a recommendation system, and proved that it is feasible to achieve differential privacy for the recommendation system without serious loss of recommendation accuracy. Hua et al. \cite{hua2015differentially} analyzed the privacy threats in various situations, concerning whether the recommender is trusted and whether the recommender is online, and gave corresponding differentially private recommendation schemes based on matrix factorization. Meng et al. \cite{meng2018personalized} divided users' privacy into sensitive privacy and non-sensitive privacy, allowing users to classify their privacy and add noise to their data locally in recommendations. Similarly, literatures \cite{zhu2013differential} and \cite{feng2016differential} improved the recommendation performance by considering recommendation-aware sensitivity or different levels of privacy. In this work, we aim to improve the recommendation performance of differentially private user-based collaborative filtering recommendations mainly based on k-means clustering.

The closest work to ours is the one carried out by Zhu et al. \cite{zhu2016differential}. They apply the exponential mechanism to repeatedly select the recommended items in both item-based and user-based collaborative filtering recommendations. Moreover, in order to improve the recommendation performance, they compared multiple similarity functions and selected the best one as the utility function for the exponential mechanism. However, the overall recommendation performance of the work \cite{zhu2016differential} still degrades very much. In our work, we focus on designing a differentially private user-based collaborative filtering recommendation scheme, and further improve the recommendation performance by employing a preprocessing based on k-means clustering and by reducing the number of applications of exponential mechanism to merely one.


\section{Problem Statement}\label{sec:Problem Statement}

\subsection{User-based Collaborative Filtering Recommendation}

We consider a user-based collaborative filtering recommendation \cite{herlocker2004evaluating,sarwar2001item}, where users submit their historical rating scores on items to a recommender who then predicts a target user's future rating scores by learning the rating similarities between the user and others from the historical data. Let $M$ denote the $|U|\times|I|$ user$-$item rating matrix, where $U$ is the user vector and $I$ is the item vector in the recommendation, $M_{ui}$ represents the historical rating score on item $i\in I$ given by user $u\in U$ if it is a non-zero value, and represents an unrated score otherwise. In practice, matrix $M$ is usually sparse, and the goal of the collaborative filtering recommendation is to predict these missing $M_{ui}$.

The basic process of user-based collaborative filtering recommendation is as follows, shown in Algorithm 1. First, the rating similarities between the target user $u$ and other users are calculated (line 1), then neighboring users are selected based on the similarities, and user $u$'s unrated scores are predicted with its neighboring users (2-5 lines), and finally, the highest rated $m$ items will be recommended to user $u$ (line 6). The main ideas of the CF algorithm can be summarized as: Users with similar preferences recommend items each other. Thus, how to calculate rating similarities is an important issue directly related to the accuracy of recommendation results.

\begin{algorithm}[t]
\caption{User-based collaborative filtering recommendation} 
{\bf Input:} 
\hangafter 1
\hangindent 1.5em
\noindent
User-item rating matrix $M$, recommendation list length $m$, neighbor set size $N$, target user $u$, and its rating record $I_u$.

{\bf Output:} 
\hangafter 1
\hangindent 1.5em
\noindent
User $u$'s recommendation list $R_u$.

    \begin{algorithmic}[1]
           \STATE Calculate similarities between user $u$ and all other users according to $M$.

           \FORALL{$i\in I - I_u$}

           \STATE Find user $u$'s neighbor set $\mathcal{N}_u$ with top $N$ users in term of similarities.

           \STATE Predict all of user $u$'s unrated scores based on its neighbor set $\mathcal{N}_u$.

           \ENDFOR

           \STATE Add the top $m$ predicted items to user $u$'s recommendation list $R_u$.

    \end{algorithmic}
\end{algorithm}

\subsection{Privacy Problem}

In our user-based collaborative filtering (CF) recommendations, the recommender is assumed to be trusted. A malicious user may receive recommendation results from the recommender, and analyze these results to infer the information of other users. Thus, the privacy issue mainly comes from the selection of a target user's neighbors. In order to make the recommendation results more accurate, a user-based CF recommendation tends to select the neighbor who is the most similar to the target user. However, it is this way of selecting the most similar neighbor that makes a privacy breach in user-based CF recommendations.

Table~\ref{tab:privacy-issue} illustrates an example on the privacy issue mentioned above. Initially, the adversary knows the recommendation algorithm and a user $u_3$'s rating record $I_3 = \{0, 3, 5, 4, 0, 0\}$, where each ``0'' represents an unrated item. Some time later, user $u_3$ rated an item $i_5$ and we have $I_3 = \{0, 3, 5, 4, ?, 0\}$, where the ``?'' represents a new rating score unknown to the adversary. Now, the adversary can launch an attack to find which item has been rated newly as follows. The adversary first registers a sibyl user while keeping the sibyl user's rating record consistent with user $u_3$'s historical record, i.e. $I_s = \{0, 3, 5, 4, 0, 0\}$. Next, the sibyl user sends a recommendation request to the system, and according to the principle of the user-based CF algorithm, the sibyl user will have a great chance of having user $u_3$ selected as its neighbor. Then, based on the recommendation results of the system, the adversary can easily deduce which item newly rated by $u_3$, and if the adversary can further learn the change in the similarity between the sibyl and $u_3$, the adversary can even deduce what is the specific rating score of the item $i_5$.
\begin{table}
  \centering
  \caption{Privacy Issue of User-based Collaborative Filtering Recommendations}\label{tab:privacy-issue}
\begin{tabular}{|c|c|c|c|c|c|c|}
\hline
 & $i_1$ & $i_2$ & $i_3$ & $i_4$ & $i_5$ & $i_6$\\
\hline
$u_1$ & 2 &  & 4 &  & 5 & \\
\hline
$u_2$ & 3 & 4 &  & 5 &  & 4\\
\hline
$u_3$ &  & 3 & 5 & 4 & ? & \\
\hline
$u_4$ & 2 & 3 &  & 3 &  & 5\\
\hline
sibyl &  & 3 & 5 & 4 &  & \\
\hline
\end{tabular}
\end{table}

\subsection{Design Target}

The above privacy issue has been basically addressed before \cite{zhu2016differential}. However, the previous work suffer a great degradation of recommendation performance while guaranteeing user privacy, and thus is far from practical applications. The design target of this paper is to provide a better balance between the recommendation performance and the user privacy. Specifically, we aim to provide a significant improvement of the recommendation performance under the same privacy levels.

\section{Technical Preliminaries}\label{sec:Technical Preliminaries}

\subsection{Differential Privacy}
Differential privacy (DP) is a new privacy model based on data distortion \cite{dwork2014algorithmic}. By adding controllable noise to the original datasets, DP guarantees the protection of personal information in a provable way, while ensuring that the perturbed data have similar statistical properties to the original data. Specifically, DP guarantees that adding or deleting any record to or from a dataset will not have significant influence on the query result based on the dataset. Hence, even if an adversary knows all records except the sensitive one in the dataset, the protection of the sensitive record can be still guaranteed.

\begin{mydef}
($\epsilon$-Differential Privacy)\cite{dwork2008differential,dwork2014algorithmic}: A randomized algorithm $\mathcal{M}$ satisfies $\epsilon$-differential privacy if for any two neighbouring datasets $t$ and $t'$ differing on at most one element, and for any set of outcomes $R\subseteq Range(\mathcal{M})$, $\mathcal{M}$ satisfies:

\begin{equation}
P[\mathcal{M}(t)\in R]\leq exp(\epsilon)\times P[\mathcal{M}(t')\in R].
\end{equation}

Where $\epsilon$ is the privacy budget, which decides the privacy level of the algorithm. A greater $\epsilon$ means less noise added and thus a lower privacy level.
\end{mydef}

\subsection{Exponential Mechanism}

Exponential mechanism (EM) is a common technique for designing algorithms with differential privacy. It was developed by Frank McSherry and Kunal Talwar \cite{mcsherry2007mechanism}. EM is popularly used, for it can well deal with both numeric and non-numeric problems.

\begin{mydef}
(Exponential Mechanism)\cite{mcsherry2007mechanism}: Given a quality function $q(t,r):\mathbb{N}^{|\chi|}\times \mathcal{R}\rightarrow \mathbb{R}$, $\triangle q$ is the sensitivity of quality function. The exponential mechanism $\mathcal{M}$ selects and outputs an element $r \in R$ with probability proportional to $exp(\frac{\epsilon q(t,r)}{2\triangle q})$.
\end{mydef}

\begin{mydef}
(Sensitivity of Exponential Mechanism)\cite{dwork2014algorithmic}: The sensitivity of exponential mechanism is defined as follows:

\begin{equation}
\triangle q=\max_{r \in R}\max_{t,t':\|t-t'\|_1\leq1}\mid q(t,r)-q(t',r)\mid
\end{equation}

Where $q(t,r)$ is a quality function, $r\in R$ is a valid output of $q(t,r)$. The sensitivity of the function q indicates the greatest effect of a single data change on the output. The size of $\triangle q$ will also directly affect the amount of noise introduced by the algorithm.
\end{mydef}

\subsection{k-means Clustering}
The k-means clustering algorithms have been used in recommendation systems without privacy guarantees in order to improve the recommendation efficiency \cite{macqueen1967some,arthur2009k}. After clustering, the neighbor selection of a target user is limited in the same category as the user, and thus the computational cost can be greatly reduced, especially when the user-item rating matrix is large. In our case, we apply k-means clustering algorithms to improving the sampling quality in the exponential mechanism, and thus improve the recommendation performance.

To get a good clustering result using the k-means clustering algorithms, two kinds of parameters should be determined properly in advance: the number of clusters $k$ and the initial cluster centers. The $k$ value is related to specific datasets and usually determined by approaches based on Silhouette Coefficient \cite{rousseeuw1987silhouettes:} or Elbow method \cite{ketchen1996the}. In our work, we will adjust the $k$ value to meet our specific requirements. It has been shown that the initial clustering centers should be selected uniformly to get a good clustering result \cite{arthur2007k}. Thus, we will use the k-means++ algorithm given by D.Arthur to determine the initial cluster center, as shown in Algorithm 2.

\begin{algorithm}[t]
\caption{The k-means++ algorithm \cite{arthur2007k}}
{\bf Input:}
\hangafter 1
\hangindent 1.5em
\noindent
The amount of cluster k.

{\bf Output:}
\hangafter 1
\hangindent 1.5em
\noindent
k initial cluster center $c_1,c_2,...,c_k$.

    \begin{algorithmic}[1]
           \STATE Choose a user randomly as the first initial cluster center $c_1$.

           \FORALL{$i=2:k$}

           \STATE Calculate the shortest distance $D(u)$ between each user and all current cluster centers.

           \STATE Sample every user $u\in U$ with probability $P(u)$ as the next cluster center $c_i$.

                    \[
                        P(u)=\frac{D(u)^2}{\sum_{u\in U}D(u)^2}
                    \]

           \ENDFOR

    \end{algorithmic}
\end{algorithm}

\section{Our Scheme}\label{sec:KDPCF}

\subsection{Overall Design}
\begin{figure}[htbp]
    \centering
    \includegraphics[width=2.2in]{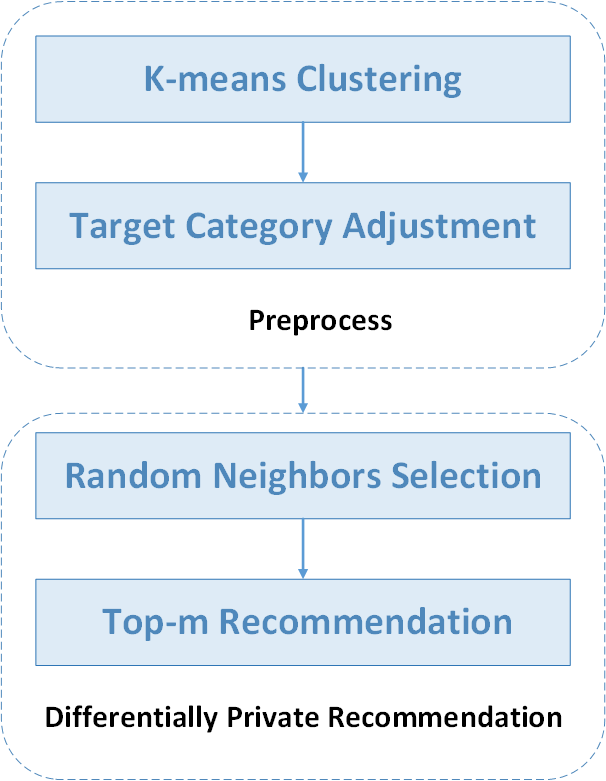}
    \caption{KDPCF Algorithm Flow Diagram}\label{fig:1}
\end{figure}

To obtain a better balance between recommendation performance and user privacy, we design a differentially private user-based collaborative filtering recommendation scheme based on k-means clustering (KDPCF). The main process of KDPCF is shown in Figure 1, and the design rationale for every step is described as follows.

\textbf{Step 1. K-means Clustering.} In the rating matrix, the quantity of users is normally large, but usually only a small part of users are significant to the recommendation for a target user. Through k-means clustering, such part of users (called the target category) can be found, and the recommendation can be based upon these users.

\textbf{Step 2. Target Category Adjustment.} The target category resulted from k-means clustering may be in various sizes. Inappropriate category size (too big or too small) can cause performance or privacy issues. Thus, this step adjusts the target category with two size thresholds, $C_{max}$ and $C_{min}$, so that the category size falls in $[C_{min}, C_{max}]$.

\textbf{Step 3. Random Neighbors Selection.} In this step, the exponential mechanism is employed to select the neighbor set of a target user with differential privacy. To achieve a good recommendation performance, instead of selecting the neighbors one by one with repeated applications of the exponential mechanism, a neighbor set is selected at one time with an application of the exponential mechanism.

\textbf{Step 4. Top-m Recommendation.} Based on the neighbor set selected in the previous step, this step predicts the rating scores of the unrated items of the target user, and finally recommend the top-$m$ items in term of the predicted scores to the user.


\subsection{The Detailed Design}

The algorithmic design of KDPCF can be presented in Alg.~\ref{alg:kdpcf}. The detailed design is described and discussed as follows.

\begin{algorithm}[h]
\caption{Differentially Private User-based Collaborative Filtering Recommendation based on K-means Clustering}\label{alg:kdpcf}
\begin{algorithmic}[1]

\REQUIRE User-item rating matrix $M$, recommendation list length $m$, privacy parameter $\epsilon$, target user $u$, its rating record $I_u$ and neighbor set size $N$;

\ENSURE $u$'s recommendation list $R_u$;

\textbf{Step 1. K-means Clustering.}

\STATE Perform k-means clustering on matrix $M$, and get the target category $C^*$;

\textbf{Step 2. Target Category Adjustment.}

\STATE Adjust category $C^*$ in size, and get the corresponding rating matrix $M^*$;

\textbf{Step 3. Random Neighbors Selection.}

\STATE Calculate the similarities between target user $u$ and other users $v \in C^*$: \label{line:cal-sim}

\FORALL{($v\in C^* \wedge v \neq u$)}

\STATE
\[
Sim(u,v) \leftarrow \frac{\sum_{i\in I_{uv}}(r_{ui}-\overline{r}_u)(r_{vi}-\overline{r}_v)}{\sqrt{\sum_{i\in I_{uv}}(r_{ui}-\overline{r}_u)^2\sum_{i\in I_{uv}}(r_{vi}-\overline{r}_v)^2}}
\]

\ENDFOR \label{line:endfor1}

\STATE Sample a random user set $U^*$ from $C^*$, and calculate the probability distribution on the set $\mathbb{N}$ of all possible neighbor sets of size $N$ from the set $U^*$ as follows:\label{line:cal-distr}

\FOR{$\mathcal{N} \in \mathbb{N}$}

\STATE
\[
\frac{exp(\epsilon\sum_{v\in \mathcal{N}}|Sim(u,v)|/(2\Delta q))}{\sum_{\mathcal{N'} \in \mathbb{N}}exp(\epsilon\sum_{v'\in \mathcal{N'}}|Sim(u,v')|/(2\Delta q))}
\]
\ENDFOR \label{line:endfor2}

\STATE Select a neighbour set $\mathcal{N}_u \in \mathbb{N}$ with the above probability distribution; \label{line:select}

\textbf{Step 4. Top-m Recommendation.}

\STATE Compute user $u$'s recommendation list $R_u$ of length $m$ from $\mathcal{N}_u$;

\end{algorithmic}
\end{algorithm}

\textbf{Step 1. K-means Clustering.} This step applies k-means clustering described in Section 4.3 directly to the overall user-item rating matrix $M$. The matrix $M$ is normally very large, e.g., it may involve thousands of users, and hundreds of items. Among these user ratings, usually only the ratings of a small part of users are significant to the recommendation for a target user. The idea is that k-means clustering can help to find out such small part of users. In short, this step simply uses the similarity defined in Eq.~\eqref{equ:sim} to do k-means clustering to matrix $M$, and finds out the target category which contains the target user.

\textbf{Step 2. Target Category Adjustment.}

The purpose of applying k-means clustering is to limit the neighbor selection of the target user to the target category, and thus improve the recommendation performance. However, due to the randomness of k-means clustering, the resulted target category may be of various sizes. The target category size may have a significant impact on both recommendation performance and user privacy. Specifically, if the target category is much larger than the neighbor set size, then the performance enhancement of the recommendation will be very limited, since there are still a large quantity of users in the category which will overwhelm the neighbor selection. On the contrary, if the target category is close to or even less than the neighbor set size, the user privacy protection will be difficult to achieve, since there are too few neighbor sets to be selected, and the selection tends to be more deterministic.

To address the category size issue above, this step adjusts the target category as follows. First, it appropriately sets a pair of size thresholds $C_{\max}$ and $C_{\min}$. Then, it compares the size of the target category with the two thresholds: if the category size is greater than $C_{\max}$, it uses the target user as one of the clustering centers, and apply Algorithm 2 to perform bisecting k-means clustering again; if the category size is less than $C_{\min}$, it merges the target category with its nearest category. This process repeats until the target category size falls in the range from $C_{\min}$ to $C_{\max}$. Note that the size thresholds $C_{\min}$ and $C_{\max}$ can be set empirically. For example, in our experiments, we set $C_{\max}$ and $C_{\min}$ to $10N$ and $5N$, respectively, where $N$ is the size of neighbor sets.

\textbf{Step 3. Random Neighbors Selection.}

Having found the target category with an appropriate size, this step turns to the Random Neighbors Selection with differential privacy based on the target category. The main idea is as follows. The algorithm first computes the similarities between the target user $u$ and all other users in the target category. Then, based on these similarities, the algorithm randomly selects a neighbor set of the target user from the target category with the exponential mechanism, achieving differential privacy.

In recommendations, Pearson correlation coefficients are widely used to calculate similarities between users. Let $I_u$ and $I_v$ be the rating records of users $u$ and $v$, respectively, and a Pearson correlation coefficient can be used to calculate the similarity between $u$ and $v$ as Eq.~\eqref{equ:sim}
\begin{equation}\label{equ:sim}
Sim(u,v)\!=\!\frac{\sum_{i\in I_{uv}}(r_{ui}-\overline{r}_u)(r_{vi}-\overline{r}_v)}{\sqrt{\sum_{i\in I_{uv}}(r_{ui}-\overline{r}_u)^2\sum_{i\in I_{uv}}(r_{vi}-\overline{r}_v)^2}}
\end{equation}
where $\overline{r}_u$, $\overline{r}_v$ represent the average rating sores of users $u$ and $v$, respectively (i.e. $\overline {r}_u = \sum_{i\in I_u}r_{ui}/|I_u|$, and $\overline {r}_v = \sum_{i\in I_v}r_{vi}/|I_v|$), $I_{uv}$ represents the intersection of rating records of users $u$ and $v$. It is noteworthy that $Sim(u, v)\in [-1,1]$, and the absolute value of $sim(u,v)$ indicates the strength of the correlation between users $u$ and $v$. Moreover, when the similarity is positive (resp. negative), it indicates that the two users are positively (resp. negatively) related.

In this step, we use Pearson correlation coefficients to measure similarities. Specifically, in Alg.~\ref{alg:kdpcf}, Lines from \ref{line:cal-sim} to \ref{line:endfor1} calculate the similarities between the target user $u$ and other users in the target category using Eq.~\eqref{equ:sim}. These similarities provide the basis for Random Neighbors Selection.

To protect user privacy, we want to make the Random Neighbors Selection satisfy differential privacy. Furthermore, we also care for the recommendation performance achieved. Thus, we need to apply the exponential mechanism to the neighbor selection and carefully design the quality function. Our main idea is to select the neighbor set with one application of the exponential mechanism.



Given the target category $C^*$, the target user $u$, and a user set $\mathcal{N} \subseteq C^* - \{u\}$, the quality of selecting the set $\mathcal{N}$ as user $u$'s neighbor set is given by:
\begin{equation}\label{equ:q}
    q(C^{*}, u, \mathcal{N})=\sum_{v \in \mathcal{N}}|Sim(u, v)|
\end{equation}

Here, we consider both positive and negative correlations the same, and define the quality function as the sum of absolute values of similarities. According to the exponential mechanism, the probability of outputting the set $\mathcal{N}$ as the neighbor set is proportional to
\begin{equation}
\exp(\frac{\epsilon q(C^*, u, \mathcal{N})}{2\Delta q})
\end{equation}
where $\Delta q$ is the sensitivity of quality function $q$, which can be computed as Eq.~\eqref{equ:deltaq}.
\begin{equation}\label{equ:deltaq}
\Delta q=\!\max_{\mathcal{N}}\max_{\|C^{*}_1 \!-\! C^{*}_2\|_1\leq1}\!\mid \!q(C^{*}_1, u, \mathcal{N})-q(C^{*}_2, u, \mathcal{N})\!\mid =\!1
\end{equation}
where $C^{*}_1$ and $C^{*}_2$ are two adjacent target categories that only differ in one user.

In the Random Neighbors Selection, we apply the quality function defined in Eq.\eqref{equ:q} to the exponential mechanism, and select the neighbor set of the target user at one time. Specifically, in Alg.~\ref{alg:kdpcf}, Lines from \ref{line:cal-distr} to \ref{line:endfor2} sample a random user set $U$ from the target category $C^*$, enumerate all possible neighbor sets of size $N$ from the user set $U$ to form a set $\mathbb{N}$ of neighbor sets, and calculate the probability distribution over the set $\mathbb{N}$. Line~\ref{line:select} simply selects a neighbor set with the probability distribution calculated. There are two points noteworthy as follows.

(1) Instead of selecting one neighbor with an application of the exponential mechanism repeatedly, selecting a neighbor set at one time can add less noise to the selection and obtain a quality function with less sensitivity, and thus achieve better recommendation performance.

(2) For two adjacent categories, we mean that both categories contain exactly the same $N$ users and only one single user differs in its rating scores. That is, we use the bounded differential privacy definition \cite{li2016differential}.

\textbf{Step 4. Top-m Recommendation.}

Given the neighbor set $\mathcal{N}^*$ selected, this step constructs the recommendation list $R_u$ by collecting the rated items of all neighbors in $\mathcal{N}^*$ while excluding the rated items from $I_u$, and then predicts the rating sores of the items from $R_u$ Eq.~\eqref{equ:sore}.
\begin{equation}\label{equ:sore}
    r^*_{ui}=\overline{r}_u + \frac{\sum_{v \in \mathcal{N}^*} Sim(u, v)*(r_{vi} - \overline{r}_v)}{\sum_{v \in \mathcal{N}^*}|Sim(u, v)|}
\end{equation}
where $\overline{r}_u$, $\overline{r}_v$ represent the average rating sores of users $u$ and $v$, respectively. Finally, in term of rating scores $r^*_{ui}$, the top-$m$ items from $R_u$ are returned as the recommendation list for user $u$.



\subsection{Privacy Analysis}
We state the differential privacy of KDPCF in Theorem~\ref{the:dp}.
\begin{mytheorem}\label{the:dp}
KDPCF (as described in Alg.~\ref{alg:kdpcf}) satisfies $\epsilon$-differential privacy.
\end{mytheorem}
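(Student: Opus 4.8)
The plan is to reduce the whole pipeline of Algorithm~\ref{alg:kdpcf} to a single invocation of the exponential mechanism followed by data-independent post-processing, so that the $\epsilon$-differential privacy of KDPCF follows from the standard guarantee of the exponential mechanism together with the post-processing property of differential privacy. Working under the bounded-DP neighbouring relation stated in the excerpt (two adjacent target categories $C^*_1, C^*_2$ contain the same $N$ users and differ only in the rating scores of a single user), the first observation I would make is that, by virtue of this chosen relation, the clustering and adjustment of Steps~1--2 fix the \emph{membership} of the target category $C^*$, and the similarity computation of Lines~\ref{line:cal-sim}--\ref{line:endfor1} is a deterministic function of the ratings; hence the only randomised, privacy-bearing computation is the neighbour-set sampling of Line~\ref{line:select}.

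Next I would verify that Lines~\ref{line:cal-distr}--\ref{line:select} are literally an instance of the exponential mechanism defined in Section~\ref{sec:Technical Preliminaries}, applied to the quality function $q$ of Eq.~\eqref{equ:q}: the output $\mathcal{N}_u$ is drawn from the candidate family $\mathbb{N}$ with probability proportional to $\exp(\epsilon\, q(C^*,u,\mathcal{N})/(2\Delta q))$, matching the prescribed form exactly. The key quantitative step is then to pin down the sensitivity $\Delta q$. Since $Sim(u,v)\in[-1,1]$ gives $|Sim(u,v)|\in[0,1]$, and since perturbing one user $v_0\neq u$ changes the single term $|Sim(u,v_0)|$ while leaving every other summand of $q=\sum_{v\in\mathcal{N}}|Sim(u,v)|$ untouched, the value of $q$ moves by at most $\big||Sim_1(u,v_0)|-|Sim_2(u,v_0)|\big|\le 1$, and this bound is attained; this establishes $\Delta q=1$ as claimed in Eq.~\eqref{equ:deltaq}. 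With $\Delta q=1$ fixed, the exponential-mechanism theorem immediately yields that the map from $C^*$ to $\mathcal{N}_u$ is $\epsilon$-differentially private.

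Finally I would close by appealing to post-processing: Step~4 turns the sampled neighbour set $\mathcal{N}_u$ into the list $R_u$ via the score prediction of Eq.~\eqref{equ:sore} and a top-$m$ selection, so $R_u$ is obtained by further processing the differentially private output, and $\epsilon$-DP is preserved. I expect the main obstacle to lie precisely here rather than in the (routine) sensitivity bound: the prediction step re-reads the raw ratings $r_{vi}$ of the selected neighbours, so it is not post-processing of $\mathcal{N}_u$ \emph{alone}. To make the argument airtight I would have to exploit the bounded-DP relation more carefully --- when the differing user $v_0$ is not in $\mathcal{N}_u$ the list $R_u$ is identical on $C^*_1$ and $C^*_2$, whereas the event $v_0\in\mathcal{N}_u$ is exactly what the exponential mechanism randomises --- and argue that conditioning on $\mathcal{N}_u$ makes the residual dependence of $R_u$ on the differing record either vanish or fall within the already-protected output. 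A secondary point needing care is the data-independent subsampling of $U^*$ in Line~\ref{line:cal-distr}: because this draw does not touch the ratings, the overall output is a mixture of $\epsilon$-DP mechanisms over the choice of $U^*$ and is therefore still $\epsilon$-DP, but this should be stated explicitly rather than taken for granted.
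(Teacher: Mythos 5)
Your proposal follows essentially the same route as the paper's proof: under the bounded neighbouring relation, treat Steps~1--2 (clustering and category adjustment) as preprocessing that fixes the membership of $C^*$, recognize Lines~\ref{line:cal-distr}--\ref{line:select} of Alg.~\ref{alg:kdpcf} as a single exponential mechanism with the quality function of Eq.~\eqref{equ:q} and sensitivity $\Delta q=1$ as in Eq.~\eqref{equ:deltaq}, and then invoke post-processing for Step~4. The only mechanical differences are cosmetic: the paper writes out the likelihood-ratio bound explicitly (extracting two factors of $\exp(\epsilon/2)$, one from the numerator and one from the normalizing sums) rather than citing the generic exponential-mechanism theorem, and it handles the subsampling of $U^*$ by inserting the factors $Pr(t_1,\mathbb{N})$ and $Pr(t_2,\mathbb{N})$ into the ratio and observing they cancel because the sampling is independent of rating scores --- which is equivalent to the mixture argument you state at the end.

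The obstacle you flag in Step~4, however, is genuine, and you should know that the paper does \emph{not} resolve it: its proof simply asserts that Top-$m$ Recommendation ``is merely based on the output of the first three steps'' and is therefore post-processing. That assertion ignores exactly what you noticed, namely that Eq.~\eqref{equ:sore} re-reads the selected neighbours' raw data ($r_{vi}$, $\overline{r}_v$, and $Sim(u,v)$), so $R_u$ is not a function of $\mathcal{N}_u$ alone. Moreover, the repair you sketch (conditioning on whether the differing user $v_0$ lands in $\mathcal{N}_u$) cannot close the gap. When $v_0\in\mathcal{N}_u$, take an item $i^*$ that is rated (highly) by $v_0$ in $T_1$ but unrated by $v_0$ in $T_2$ and unrated by every other user in the category: lists containing $i^*$ then have strictly positive probability under $T_1$ and probability zero under $T_2$, so the likelihood ratio for the list-valued output is unbounded and pure $\epsilon$-differential privacy fails for the pipeline as literally described. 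What the exponential-mechanism argument protects is the \emph{identity of the neighbour set}; extending the guarantee to $R_u$ would require privatizing the prediction step itself (or declaring the mechanism's output to be $\mathcal{N}_u$ and treating the prediction as computation on the user's own side). In short, your write-up is more careful than the paper's own proof on precisely this point; neither your proposal nor the paper contains an argument that establishes Theorem~\ref{the:dp} for the full list-valued output, and you have not missed any idea that the paper actually supplies.
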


\begin{proof}
Let $T_1$ and $T_2$ be a pair of adjacent datasets, that is, they contain exactly the same users and only one single user differs in its rating scores, and $t_1$ and $t_2$ represent the target categories of them, respectively. We regard the steps of k-means clustering and target category adjustment as a preprocess for each target user $u$, that is, we let the differing happens after the clustering, and hence both target categories still contain exactly the same users, and they differ in at most a single user in rating scores. According to the exponential mechanism, we can derive the probability ratio of any output $\mathcal{N}$ of Random Neighbors Selection step as follows:
\begin{align*}
  \frac{Pr[M_{RNS}(t_1)=\mathcal{N}]}{Pr[M_{RNS}(t_2)=\mathcal{N}]}& = \frac{Pr(t_1, \mathbb{N}) \cdot \frac{exp(\frac{\epsilon q(t_1, \mathcal{N})}{2\Delta q})}{\sum_{\mathcal{N}^{'}\in \mathbb{N}}exp(\frac{\epsilon q(t_1,\mathcal{N}^{'})}{2\Delta q})}}{Pr(t_2, \mathbb{N}) \cdot \frac{exp(\frac{\epsilon q(t_2,\mathcal{N})}{2\Delta q})}{\sum_{\mathcal{N}^{'}\in \mathbb{N}}exp(\frac{\epsilon q(t_2,\mathcal{N}^{'})}{2\Delta q})}}          \\
  &                                          =\left(\frac{exp(\frac{\epsilon q(t_1,\mathcal{N})}{2\Delta q})}{exp(\frac{\epsilon q(t_2,\mathcal{N})}{2\Delta q})}\right)\cdot\left(\frac{\sum_{\mathcal{N}^{'}\in \mathbb{N}}exp(\frac{\epsilon q(t_2,\mathcal{N}^{'})}{2\Delta q})}{\sum_{\mathcal{N}^{'}\in \mathbb{N}}exp(\frac{\epsilon q(t_1,\mathcal{N}^{'})}{2\Delta q})}\right)    \\
  &                                                 \leq exp(\frac{\epsilon}{2})\cdot\left(\frac{\sum_{\mathcal{N}^{'}\in \mathbb{N}}exp(\frac{\epsilon}{2})exp(\frac{\epsilon q(t_1,\mathcal{N}^{'})}{2\Delta q})}{\sum_{\mathcal{N}^{'}\in \mathbb{N}}exp(\frac{\epsilon q(t_1,\mathcal{N}^{'})}{2\Delta q})}\right)                   \\
  &                                                 \leq exp(\frac{\epsilon}{2})\cdot exp(\frac{\epsilon }{2})\cdot \left(\frac{\sum_{\mathcal{N}^{'}\in \mathbb{N}}exp(\frac{\epsilon q(t_1,\mathcal{N}^{'})}{2\Delta q})}{\sum_{\mathcal{N}^{'}\in \mathbb{N}}exp(\frac{\epsilon q(t_1,\mathcal{N}^{'})}{2\Delta q})}\right)       \\
  &                                                 = exp(\epsilon)
\end{align*}
where $\mathbb{N}$ is the set of neighbor sets sampled from the target categories, $Pr(t_1, \mathbb{N})$ and $Pr(t_2, \mathbb{N})$ are the probabilities of sampling $\mathbb{N}$ from categories $t_1$ and $t_2$, respectively. Since both target categories have exactly the same users, and the sampling is independent on rating scores, it is noteworthy that we have $Pr(t_1, \mathbb{N}) = Pr(t_2, \mathbb{N})$.

So far, we can conclude that the first three steps of KDPCF satisfy differential privacy. Meanwhile, since the last step Top-m Recommendation is merely based on the output of the first three steps, it is actually a post-process. Therefore, due to the post-processing property of differential privacy, KDPCF satisfies differential privacy. This completes the proof. $\Box$
\end{proof}

\section{Performance Analysis and Evaluation}\label{sec:experiment}

In our experiments, we use the recommendation dataset, MovieLens, which contains 100,000 ratings by 943 users on 1682 items. The format of each record is (user ID, item ID, rating, timestamp), where the timestamp attribute is not used in the experiments. Each user rated at least 20 items and the rating range is 1-5. In order to verify the accuracy of the results, we divide the data into two parts: the training set and the testing set, with a ratio of 4:1. Finally, considering the randomness of differentially private algorithms, all experimental results take the average of 100 runs.

In the experiments, recall and precision are used to measure recommendation performances. In the context of recommendations, recall and precision are defined as follows.
\begin{equation}\label{equ:recall}
  Recall = \frac{\sum_{u \in U}|R_u\cap T_u|}{\sum_{u \in U}|T_u|}
\end{equation}
\begin{equation}\label{equ:precision}
  Precision = \frac{\sum_{u\in U}|R_u \cap T_u|}{\sum_{u \in U}|R_u|}
\end{equation}
where $U$ is the user set, $R_u$ is the recommendation list of user $u$ provided by recommendation schemes based on the training set, and $T_u$ is the rating list of user $u$ in the testing set.


In order to evaluate the recommendation performance, we set $k=2|U|/(C_{min}+C_{max})$ in our recommendation scheme, KDPCF, and compare it with the following two schemes in term of recall and precision.
\begin{itemize}
\item \textbf{CF}: The original user-based collaborative filtering recommendation scheme without differential privacy guarantees.
\item \textbf{DPCF}: The differentially private user-based collaborative filtering recommendation scheme, which is essentially the same as the DP-UR algorithm in literature \cite{zhu2016differential}, except that, for comparison fairness, we use the simple composition theorem instead of the advanced one when composition is needed.
\end{itemize}

We compare the recommendation performances among three schemes, CF, DPCF and KDPCF under the following conditions: (1) when recommendation list length $m$ varies; (2) when the neighbor set size $N$ varies; and (3) when the privacy budget $\epsilon$ varies. In default, we set recommendation list length $m=30$, the neighbor set size $N=30$, and the privacy budget $\epsilon=1$.

In Fig.~\ref{fig:comparison} (a) and (b), we can easily see that recall and precision show opposite trends as $m$ increases. This is because in Eq.~\eqref{equ:recall} and  \eqref{equ:precision}, increasing $m$ means raising $R_u$, and thus raising the recall but reducing the precision. Fig.~\ref{fig:comparison} (c) and (d) show that both recall and precision raise gently as the the neighbor set size $N$ increases. Similarly, Fig.~\ref{fig:comparison} (e) and (f) also show that both recall and precision increase gently as the privacy budget $\epsilon$ increases.

\begin{figure}[!htp]
    \begin{minipage}[t]{0.48\linewidth}
    \centering
    \includegraphics[height=5cm,width=6.5cm]{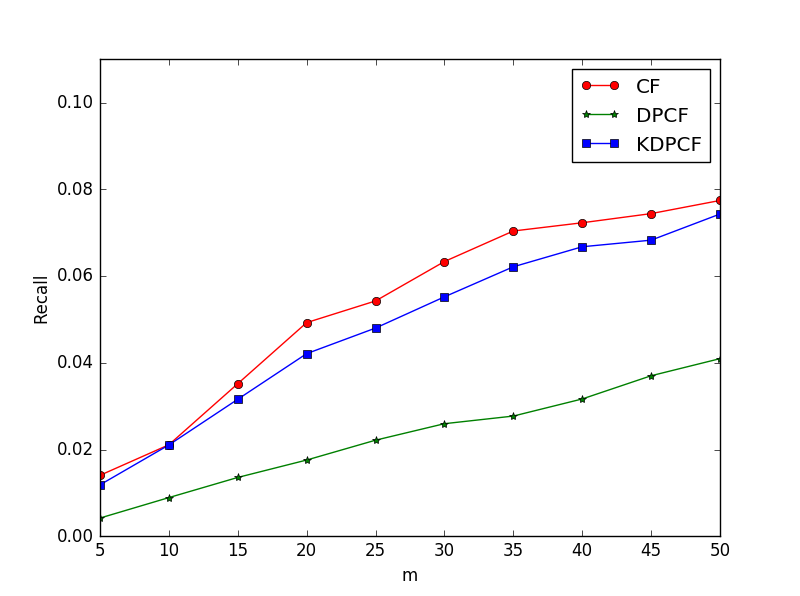}
    \caption*{(a)}
    \label{fig:side:a}
    \end{minipage}
    \begin{minipage}[t]{0.48\linewidth}
    \centering
    \includegraphics[height=5cm,width=6.5cm]{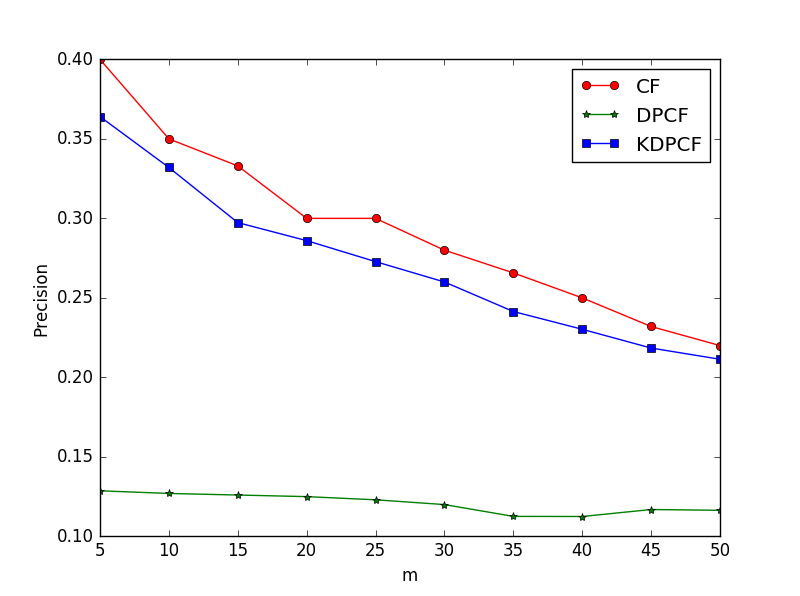}
    \caption*{(b)}
    \label{fig:side:c}
    \end{minipage}

    \begin{minipage}[t]{0.48\linewidth}
    \centering
    \includegraphics[height=5cm,width=6.5cm]{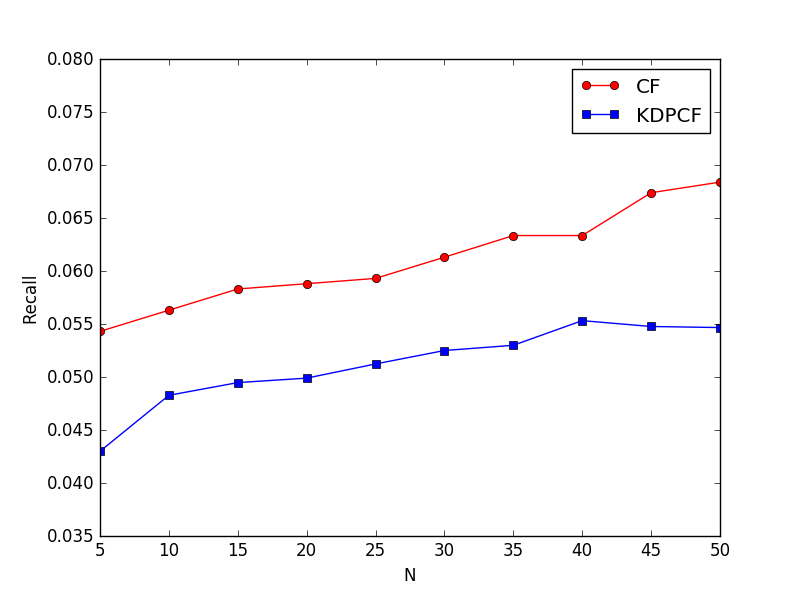}
    \caption*{(c)}
    \label{fig:side:c}
    \end{minipage}
    \begin{minipage}[t]{0.48\linewidth}
    \centering
    \includegraphics[height=5cm,width=6.5cm]{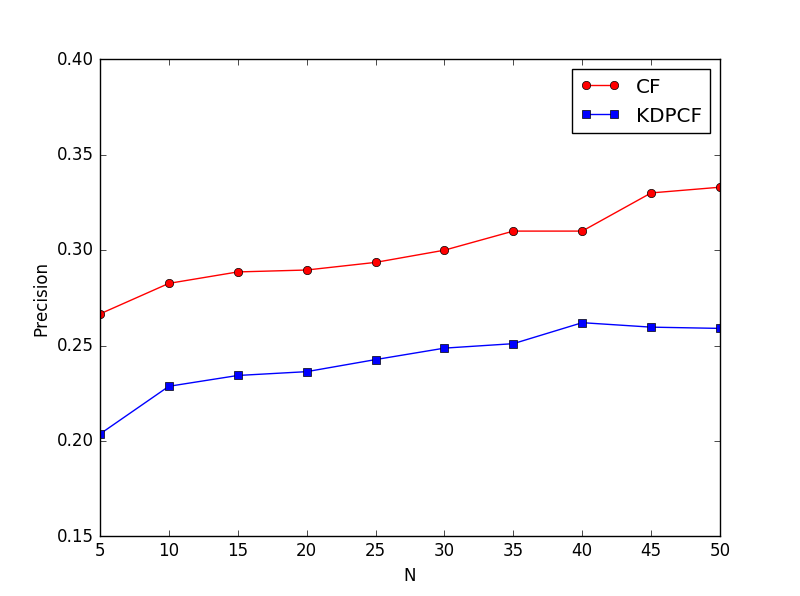}
    \caption*{(d)}
    \label{fig:side:d}
    \end{minipage}

    \begin{minipage}[t]{0.48\linewidth}
    \centering
    \includegraphics[height=5cm,width=6.5cm]{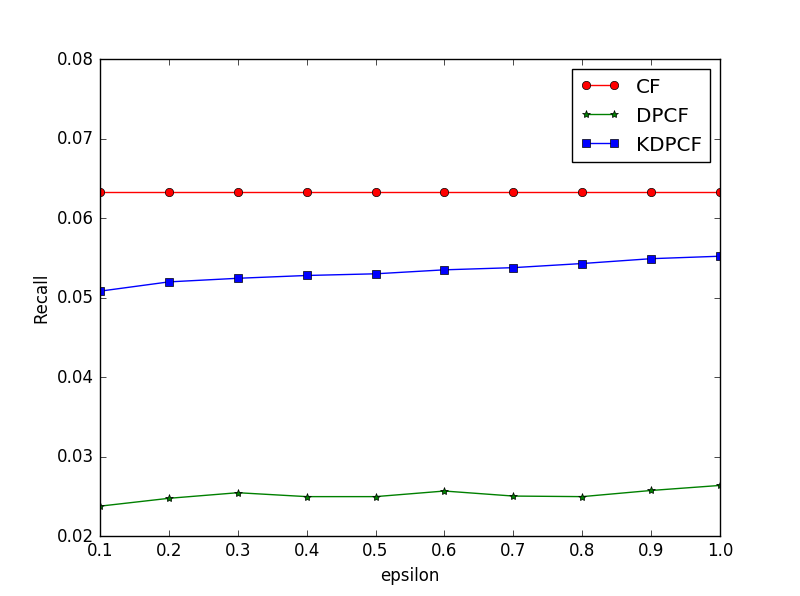}
    \caption*{(e)}
    \label{fig:side:b}
    \end{minipage}
    \begin{minipage}[t]{0.48\linewidth}
    \centering
    \includegraphics[height=5cm,width=6.5cm]{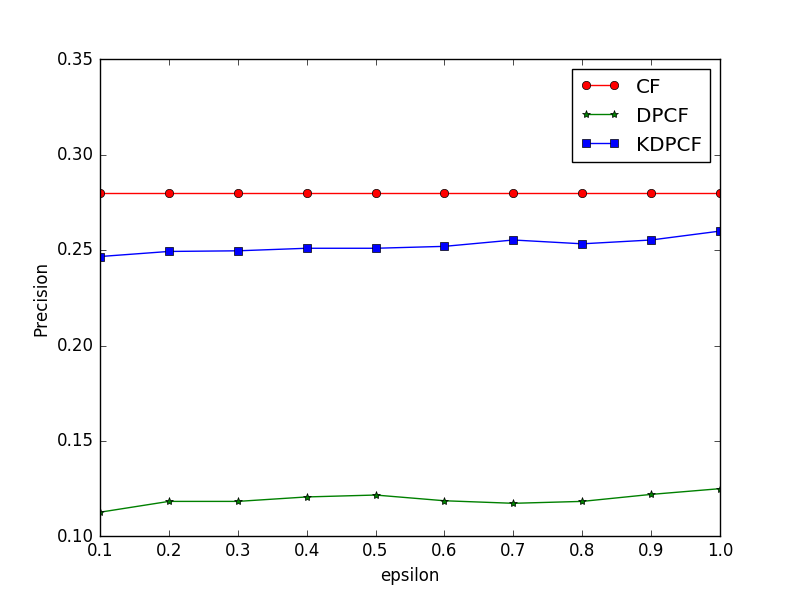}
    \caption*{(f)}
    \label{fig:side:d}
    \end{minipage}

    \caption{Performance Comparison in term of Recall and Precision}\label{fig:comparison}
\end{figure}

From all these figures, it is obvious to see that our scheme, KDPCF, is significantly superior to DPCF, the differentially private user-based collaborative filtering scheme without clustering, while slightly inferior to CF, the recommendation scheme without any differential privacy guarantees, when the recommendation list length $m$, privacy budget $\epsilon$, and the neighbor set size are the same. This demonstrates that reasonable clustering can greatly improved the recommendation performance of differentially private recommendation schemes.

\section{Conclusion}\label{sec:conclusion}

In this paper, we have proposed a differentially private user-based collaborative filtering recommendation scheme based on k-means clustering, KDPCF, addressing the recommendation performance degradation problem that arises in previous researches. Specifically, we apply k-means clustering to recommendations, and adjust the target category to an appropriate size, so that recommendations are limited within the well-sized target category, and thus a better balance between recommendation performance and user privacy can be obtained. We also employ the subset sampling to randomly select a neighbor set from the target category with a single application of the exponential mechanism, and perform recommendations based on the neighbor set, further improving the recommendation performance under the same privacy levels. Theoretically analysis shows that our scheme achieves differential privacy, and experimental evaluations demonstrate the high recommendation performance of our scheme.



\section*{References}

\bibliography{mybibfile}

\begin{thebibliography}{10}
\expandafter\ifx\csname url\endcsname\relax
  \def\url#1{\texttt{#1}}\fi
\expandafter\ifx\csname urlprefix\endcsname\relax\def\urlprefix{URL }\fi
\expandafter\ifx\csname href\endcsname\relax
  \def\href#1#2{#2} \def\path#1{#1}\fi

\bibitem{mobasher2005effective}
B.~Mobasher, R.~Burke, R.~Bhaumik, C.~Williams, Effective attack models for
  shilling item-based collaborative filtering systems, in: Proceedings of the
  2005 WebKDD Workshop, held in conjuction with ACM SIGKDD, Vol. 2005, 2005.

\bibitem{dwork2006calibrating}
C.~Dwork, F.~McSherry, K.~Nissim, A.~Smith, Calibrating noise to sensitivity in
  private data analysis, in: Theory of cryptography conference, Springer, 2006,
  pp. 265--284.
\newblock \href {http://dx.doi.org/10.1007/11681878_14}
  {\path{doi:10.1007/11681878_14}}.

\bibitem{dwork2008differential}
C.~Dwork, Differential privacy: A survey of results, in: International
  Conference on Theory and Applications of Models of Computation, Springer,
  2008, pp. 1--19.
\newblock \href {http://dx.doi.org/10.1007/978-3-540-79228-4_1}
  {\path{doi:10.1007/978-3-540-79228-4_1}}.

\bibitem{zhu2013differential}
T.~Zhu, G.~Li, Y.~Ren, W.~Zhou, P.~Xiong, Differential privacy for
  neighborhood-based collaborative filtering, in: Proceedings of the 2013
  IEEE/ACM International Conference on Advances in Social Networks Analysis and
  Mining, ACM, 2013, pp. 752--759.
\newblock \href {http://dx.doi.org/10.1145/2492517.2492519}
  {\path{doi:10.1145/2492517.2492519}}.

\bibitem{zhu2016differential}
X.~Zhu, Y.~Sun, Differential privacy for collaborative filtering recommender
  algorithm, in: Proceedings of the 2016 ACM on International Workshop on
  Security And Privacy Analytics, ACM, 2016, pp. 9--16.
\newblock \href {http://dx.doi.org/10.1145/2875475.2875483}
  {\path{doi:10.1145/2875475.2875483}}.

\bibitem{li2016differential}
N.~Li, M.~Lyu, D.~Su, W.~Yang, Differential privacy: From theory to practice,
  Synthesis Lectures on Information Security, Privacy, \& Trust 8~(4) (2016)
  1--138.
\newblock \href {http://dx.doi.org/10.2200/S00735ED1V01Y201609SPT018}
  {\path{doi:10.2200/S00735ED1V01Y201609SPT018}}.

\bibitem{xue2005scalable}
G.-R. Xue, C.~Lin, Q.~Yang, W.~Xi, H.-J. Zeng, Y.~Yu, Z.~Chen, Scalable
  collaborative filtering using cluster-based smoothing, in: Proceedings of the
  28th annual international ACM SIGIR conference on Research and development in
  information retrieval, ACM, 2005, pp. 114--121.
\newblock \href {http://dx.doi.org/10.1145/1076034.1076056}
  {\path{doi:10.1145/1076034.1076056}}.

\bibitem{canny2002collaborative}
J.~Canny, Collaborative filtering with privacy, in: Security and Privacy, 2002.
  Proceedings. 2002 IEEE Symposium on, IEEE, 2002, pp. 45--57.
\newblock \href {http://dx.doi.org/10.1109/SECPRI.2002.1004361}
  {\path{doi:10.1109/SECPRI.2002.1004361}}.

\bibitem{armknecht2011efficient}
F.~Armknecht, T.~Strufe, An efficient distributed privacy-preserving
  recommendation system, in: Ad Hoc Networking Workshop (Med-Hoc-Net), 2011 The
  10th IFIP Annual Mediterranean, IEEE, 2011, pp. 65--70.
\newblock \href {http://dx.doi.org/10.1109/Med-Hoc-Net.2011.5970495}
  {\path{doi:10.1109/Med-Hoc-Net.2011.5970495}}.

\bibitem{brickell2008cost}
J.~Brickell, V.~Shmatikov, The cost of privacy: destruction of data-mining
  utility in anonymized data publishing, in: Proceedings of the 14th ACM SIGKDD
  international conference on Knowledge discovery and data mining, ACM, 2008,
  pp. 70--78.
\newblock \href {http://dx.doi.org/10.1145/1401890.1401904}
  {\path{doi:10.1145/1401890.1401904}}.

\bibitem{chang2010towards}
C.-C. Chang, B.~Thompson, H.~W. Wang, D.~Yao, Towards publishing recommendation
  data with predictive anonymization, in: Proceedings of the 5th ACM Symposium
  on Information, Computer and Communications Security, ACM, 2010, pp. 24--35.
\newblock \href {http://dx.doi.org/10.1109/Med-Hoc-Net.2011.5970495}
  {\path{doi:10.1109/Med-Hoc-Net.2011.5970495}}.

\bibitem{mcsherry2009differentially}
F.~McSherry, I.~Mironov, Differentially private recommender systems: Building
  privacy into the netflix prize contenders, in: Proceedings of the 15th ACM
  SIGKDD international conference on Knowledge discovery and data mining, ACM,
  2009, pp. 627--636.
\newblock \href {http://dx.doi.org/10.1145/1557019.1557090}
  {\path{doi:10.1145/1557019.1557090}}.

\bibitem{hua2015differentially}
J.~Hua, C.~Xia, S.~Zhong, Differentially private matrix factorization, in:
  IJCAI, 2015, pp. 1763--1770.
\newblock \href {http://dx.doi.org/10.1145/2792838.2800191}
  {\path{doi:10.1145/2792838.2800191}}.

\bibitem{meng2018personalized}
X.~Meng, S.~Wang, K.~Shu, J.~Li, B.~Chen, H.~Liu, Y.~Zhang, Personalized
  privacy-preserving social recommendation, in: Proc. AAAI Conf. Artif.
  Intell., 2018.

\bibitem{feng2016differential}
T.~Feng, Y.~Guo, Y.~Chen, A differential private collaborative filtering
  framework based on privacy-relevance of topics, in: Computers and
  Communication (ISCC), 2016 IEEE Symposium on, IEEE, 2016, pp. 946--951.
\newblock \href {http://dx.doi.org/10.1109/ISCC.2016.7543858}
  {\path{doi:10.1109/ISCC.2016.7543858}}.

\bibitem{herlocker2004evaluating}
J.~L. Herlocker, J.~A. Konstan, L.~G. Terveen, J.~T. Riedl, Evaluating
  collaborative filtering recommender systems, ACM Transactions on Information
  Systems (TOIS) 22~(1) (2004) 5--53.
\newblock \href {http://dx.doi.org/10.1145/963770.963772}
  {\path{doi:10.1145/963770.963772}}.

\bibitem{sarwar2001item}
B.~Sarwar, G.~Karypis, J.~Konstan, J.~Riedl, Item-based collaborative filtering
  recommendation algorithms, in: Proceedings of the 10th international
  conference on World Wide Web, ACM, 2001, pp. 285--295.
\newblock \href {http://dx.doi.org/10.1145/371920.372071}
  {\path{doi:10.1145/371920.372071}}.

\bibitem{dwork2014algorithmic}
C.~Dwork, A.~Roth, et~al., The algorithmic foundations of differential privacy,
  Foundations and Trends{\textregistered} in Theoretical Computer Science
  9~(3--4) (2014) 211--407.
\newblock \href {http://dx.doi.org/10.1561/0400000042}
  {\path{doi:10.1561/0400000042}}.

\bibitem{mcsherry2007mechanism}
F.~McSherry, K.~Talwar, Mechanism design via differential privacy, in:
  Foundations of Computer Science, 2007. FOCS'07. 48th Annual IEEE Symposium
  on, IEEE, 2007, pp. 94--103.
\newblock \href {http://dx.doi.org/10.1109/FOCS.2007.66}
  {\path{doi:10.1109/FOCS.2007.66}}.

\bibitem{macqueen1967some}
J.~MacQueen, et~al., Some methods for classification and analysis of
  multivariate observations, in: Proceedings of the fifth Berkeley symposium on
  mathematical statistics and probability, Vol.~1, Oakland, CA, USA, 1967, pp.
  281--297.

\bibitem{arthur2009k}
D.~Arthur, B.~Manthey, H.~R{\"o}glin, k-means has polynomial smoothed
  complexity, in: Foundations of Computer Science, 2009. FOCS'09. 50th Annual
  IEEE Symposium on, IEEE, 2009, pp. 405--414.
\newblock \href {http://dx.doi.org/10.1109/FOCS.2009.14}
  {\path{doi:10.1109/FOCS.2009.14}}.

\bibitem{rousseeuw1987silhouettes:}
P.~J. Rousseeuw, Silhouettes: a graphical aid to the interpretation and
  validation of cluster analysis, Journal of Computational and Applied
  Mathematics 20~(1) (1987) 53--65.
\newblock \href {http://dx.doi.org/10.1016/0377-0427(87)90125-7}
  {\path{doi:10.1016/0377-0427(87)90125-7}}.

\bibitem{ketchen1996the}
D.~J. Ketchen, C.~L. Shook, The application of cluster analysis in strategic
  management research: An analysis and critique, Strategic Management Journal
  17~(6) (1996) 441--458.
\newblock \href
  {http://dx.doi.org/10.1002/(SICI)1097-0266(199606)17:6<441::AID-SMJ819>3.0.CO;2-G}
  {\path{doi:10.1002/(SICI)1097-0266(199606)17:6<441::AID-SMJ819>3.0.CO;2-G}}.

\bibitem{arthur2007k}
D.~Arthur, S.~Vassilvitskii, k-means++: The advantages of careful seeding, in:
  Proceedings of the eighteenth annual ACM-SIAM symposium on Discrete
  algorithms, Society for Industrial and Applied Mathematics, 2007, pp.
  1027--1035.
\newblock \href {http://dx.doi.org/10.1145/2492517.2492519}
  {\path{doi:10.1145/2492517.2492519}}.

\end{thebibliography}
\bibliographystyle{elsarticle-num}

\end{document}